\renewcommand{\algorithmiccomment}[1]{\bgroup\hfill//~#1\egroup}
\newtheorem{theorem}{Theorem}
\newtheorem{lemma}{Lemma}
\newtheorem{definition}{Definition}
\newtheorem{example}{Example}
\newtheorem{problem}{Problem}
\newtheorem{remark}{Remark}
\newcommand{\ucs}{$\mathsf{UCS}$\xspace}
\newcommand{\eiucs}{$\mathsf{EI}$-$\mathsf{UCS}$\xspace}
\newcommand{\get}{$\mathsf{Get}$-$\theta$\xspace}
\newcommand{\increase}{$\mathsf{Increase}$\xspace}
\newcommand{\beauty}{$\mathsf{BEAUTY}$\xspace}
\newcommand{\beautyps}{$\mathsf{BEAUTY}$-$\mathsf{PS}$\xspace}
\newcommand{\abeauty}{$\mathsf{A}$-$\mathsf{BEAUTY}$\xspace}
\newcommand{\beast}{$\mathsf{BEAST}$\xspace}
\newcommand{\beastps}{$\mathsf{BEAST}$-$\mathsf{PS}$\xspace}
\newcommand{\abeast}{$\mathsf{A}$-$\mathsf{BEAST}$\xspace}
\newcommand{\abnb}{$\mathsf{A}$-$\mathsf{BEAUTY}$\&$\mathsf{BEAST}$\xspace}
\title{A Generalization of the Shortest Path Problem \\ to Graphs with Multiple Edge-Cost Estimates}
\author{	
	Eyal Weiss\textsuperscript{\rm 1},
	Ariel Felner\textsuperscript{\rm 2}, 
	Gal A. Kaminka\textsuperscript{\rm 1}}
\begin{document}

\maketitle

\begin{abstract}
The shortest path problem in graphs is a cornerstone of AI theory and applications. Existing algorithms generally ignore edge weight computation time.
We present a generalized framework for weighted directed graphs, where edge weight can be computed (estimated) multiple times, at increasing accuracy and run-time expense. 
%
% Option 1:
% This raises a generalized shortest path problem 
%
% Option 2:
This raises several generalized variants of the shortest path problem. We introduce
the problem 
of finding a path with the tightest lower-bound on the optimal cost. 
% that optimizes different aspects of path cost and its uncertainty. 
We then present two complete algorithms for the
generalized problem, and empirically demonstrate their efficacy.
\end{abstract}
\section{Introduction}\label{sec:intro}

The canonical problem of finding the shortest path in a directed, weighted graph is fundamental to artificial intelligence and its applications.
The \emph{cost} of a path in a weighted graph, is the sum of the weights of its edges. 
Informed and uninformed search algorithms for finding  \emph{shortest} (minimal-cost) paths are heavily used in planning, scheduling, 
machine learning, constrained % TODO Journal: satisfaction and 
optimization, and more.  
 
A common assumption made by existing search algorithms is that the edge weights are determined in negligible (or very small constant) time. % Journal: can clarify that we mean also "with small constant)"
However, recent advances challenge this assumption.
This occurs when weights are determined by queries to remote sources, %TODO: specific references for searching in remote sources would be good
or when the graph is massive, and is stored in external memory (e.g., disk). In such cases, additional data-structures and algorithmic modifications are needed to optimize the order in which edges are visited, i.e., optimizing access to external memory~\cite{vitter01,hutchinson03,jabbar08,korf2008linear,korf08tbbfs,korf16,sturtevant2016external}.
Similarly, when edge weights are computed dynamically using learned models, or external procedures, it is beneficial to delay weight evaluation until necessary~\cite{dellin2016unifying,narayanan2017heuristic,mandalika2018lazy,mandalika2019generalized}.

A concrete example serves to illustrate the setting we address. Consider searching for the fastest route between two cities, where edges represent roads, and edge weights represent current travel times, which are queried from an online source (e.g., Google maps).
Even a few milliseconds for each query makes the weight evaluation a significant component in the
search run-time. Travel times can be estimated even more accurately with information from additional sources (e.g., weather conditions, road curvature and elevation), but their use may significantly increase edge weight computation time.

% TODO: 
% I wanted to add "It is possible to locally cache query results, and modify the algorithm to make better use of such a cache, but this only alleviates the problem."  But this would have made the argument worse, as our proposal does not make use of such caches, and if anything, increases the number of queries!  Our approach is really just about edge computation time, not query time.

%The prevalent approach managing weight computation is to delay edge , e.g., by visiting an edge without computing its weight until later, or delaying the edge visit as much as possible~\cite{dellin2016unifying,narayanan2017heuristic,mandalika2018lazy,mandalika2019generalized}. 

We present a novel approach to handling expensive weight computation by allowing the search algorithms 
to incrementally use multiple \textit{weight estimators}, that compute the edge weight with increasing accuracy, but also at increasing computation time. Specifically, we replace edge weights with an ordered set of estimators, each providing a lower and upper bound on the true weight.  A search algorithm may quickly compute loose bounds on the edge weight, and invest more computation on a tighter estimator later in the process.  In the example above, a local database can be queried
quickly to get rough bounds on the travel times (based on the fixed distance and speed limits). Incrementally, online queries and computations can be used as needed
to get more accurate edge weight estimations, at increasing computational expense.
This approach follows the recently suggested concept of dynamic estimation during planning~\cite{weissgeneralization,weiss2022position} and its first implementation in AI planning~\cite{weiss2023planning}.

Having  multiple weight estimators for edges is a proper generalization of standard edge weights, and raises several shortest path problem variants. 
The classic singular edge weight is a special case, of an estimator whose lower- and upper- bounds are equal. However, since the true weight may not be known (even applying the most expensive estimator), other variants of the shortest path problems involve finding paths that have the best bounds on the optimal cost.
%address finding paths whose bounds on the shortest-path cost are optimal in some aspect. 

\begin{comment}
%The approach complements existing approaches to delaying edge weight computation, or accessing edges in external memory.
%
The use of weight estimators raises at least three separate shortest path problem variants: Finding the shortest path lower bound (the \textit{best-case shortest path}), finding the shortest path upper bound (the \textit{worst-case shortest path}), and finding the shortest path tightest bound (the \textit{worst-case guaranteed shortest path}). % (the worst-case guaranteed shortest path). 
\end{comment}

%In particular, we wish to determine whether the cost of a given path between two vertices is optimal, or within some suboptimality bound. 
%We show this requires solving
In this paper, we introduce the \textit{shortest path tightest lower-bound} (SLB) problem, which is to find a path with the tightest lower bound on the optimal cost. SLB is an important shortest-path problem variant in graphs
with multiply-estimated edge weights, since its solution provides a lower bound for the true cost of \emph{any} solution, even when costs are unknown, and furthermore it is key to determining optimal (or bounded-suboptimal) paths in such graphs, as we will discuss. 

We present \beauty, an uninformed search algorithm based on \emph{uniform-cost search} (UCS, a variant of \emph{Dijksra's} algorithm)~\cite{felner2011position}. We then use it to construct an anytime algorithm (\abeauty) 
that provides additional flexibility for trade-offs between time spent on search and time spent on estimation.
Both algorithms are shown to be correct and complete.
%which is guaranteed to solve SLB problems.
%The algorithms and theoretical guarantees are discussed in detail. 
Experiments demonstrate
the dramatic computational savings they offer compared to the baseline which uses the most accurate and expensive estimates in all cases.

\begin{comment}
\begin{itemize}

\item our novel framework \& motivation: agent-dependent costs
%מוטיבציה טבעית כאשר משקולות הגרף והמשערכים הם תלויי סוכן (למשל תלויות בפרמטרים פיזיים /  מאפיינים שלו) , ואז התרחיש של חיפוש goal directed עם משערכים דינמיים הגיוני מאוד. זה טבעי למשל עבור רובוטים שצריכים לבצע משימות וכאשר רוצים למזער זמן ביצוע. נשים לב שגישה רווחת היום היא מזעור בעקיפין באמצעות מאפיינים סטטיים של הבעיה, כמו למשל מרחקים. זה מוביל לכך שמנסים לפתור בעיה מקורבת (קלה יותר), בעוד שאנו מנסים להגדיר את הבעיה המדויקת, ומכאן ברור שכאשר באמת רוצים למזער נניח זמן ביצוע אז עדיף לעבוד ישירות על המטריקה הזו.
\item complete anytime algorithm for two of the more fundamental problems
\end{itemize}    
\end{comment}

\section{Background and Related Work}\label{sec:related}
To put this research in context, consider the \textit{abstract} components of the search run-time $T$, in a manner inspired by~\cite{mandalika2019generalized}:
\begin{equation}\label{eq:efforts}
	T=T_w + T_v = \tau_{w} \times w + \tau_{v} \times v,
\end{equation}
where $T_w$ is the time spent on edge weight computation and $T_v$ is the time spent on vertex (search) operations (e.g., expansion, queue operations, etc.).
$T_w, T_v$ can be decomposed as follows: $w$ is the number of edge weight computations conducted, and $v$ the number of vertices encountered during the search; $\tau_{w}, \tau_{v}$ are respectively the average \textit{edge weight computation time}, and average \textit{vertex search operations} time (for every vertex considered). %, such as expansion and priority queue operations).  

We use this abstract view to examine different algorithmic approaches in terms of their efforts to reduce $v$ or $w$, sometimes trading an increase in one parameter to reduce another. Standard search algorithms assume $\tau_{w}$ is negligible (or a small constant) and so their effort is mostly on reducing $v$. In contrast, algorithms for finding shortest paths in robot configuration spaces must consider settings where $\tau_{w}$ is \textit{high}, since in these applications, edge existence and cost are determined by expensive computations for validating geometric and kinematic constraints.  Thus these algorithms reduce $w$ by explicitly delaying weight computations~\cite{dellin2016unifying,narayanan2017heuristic,mandalika2018lazy,mandalika2019generalized}, even at the cost of increasing~$v$. 
% journal:  vertices model points in robot configuration spaces, and edges model potential transformations between them.
 Related challenges arise in planning, where action costs can be computed by external (lengthy) procedures~\cite{dornhege2012semantic,gregory2012planning,ijcai2017p600}, %journal: ,allen21
 or when multiple heuristics have different run-times~\cite{karpas2018rational}.

There are also approaches that seek to directly reduce $\tau_{w}$ (rather than $w$). % TODO Journal:, though not explicitly as we do in this paper. 
%$\tau_{w}$ can be high 
When the graph is too large to fit in random-access memory, it is stored externally (i.e., disk). External-memory graph search algorithms optimize the memory access patterns for edges (and vertices), so as to make better use of faster memory (caching)~\cite{vitter01,hutchinson03,jabbar08,korf2008linear,korf08tbbfs,korf16,sturtevant2016external}.  This reduces $\tau_{w}$ by amortizing the computation costs, but still assumes a single weight computation per edge. 
 
The approach we take in this paper is complementary to those above.  We consider a case where the weight of each edge can be estimated multiple times, successively, at increasing expense for greater accuracy.
The component $T_w=\tau_{w}\times w$ is then replaced with
\begin{equation}\label{eq:efforts_refined}
T_w=\tau_{w_1}\times w_1+\tau_{w_2}\times w_2\ldots \tau_{w_k}\times w_k,
\end{equation}
% $T_w=(\tau_{w_1}\times n_1+\tau_{w_2}\times n_2\ldots \tau_{w_k}\times n_k)$,
with $\tau_{w_1}<\ldots<\tau_{w_k}$, and possibly, $w_1+\ldots +w_k > w$. However, the total sum in Eq.~\ref{eq:efforts_refined} may be much smaller than the original $\tau_{w}\times w$. While
naturally, $\tau_{w_k}\leq \tau_{w}$, the search algorithm may produce $w_k\ll w$.
The algorithms presented here make use of this to balance search effort and edge evaluation in a refined manner, and thus to reduce overall run-time $T_w$. 

This re-thinking of edge weights is independent of, and complementary to, other extensions to the definition of weights in graphs. 
% Weighted graphs are used in numerous ways. Over the years, the definition of weights have been extended in multiple ways. 
For example, 
scalar weights can be \emph{random}, drawn from a distribution associated with each edge~\cite{frank1969shortest}. %journal: ,kwon2013robust,shahabi2015robust
Fuzzy weights~\cite{okada1994fuzzy} allow quantification of uncertainty by grouping approximate weight ranges to several representative sets. 
Multi-objective weights~\cite{StewartW91} allow each edge to be associated with a vector of different weights, facilitating optimization of multiple objectives. 
% journal: Dynamic graphs~\cite{roditty2004dynamic} have structure and weights that change with time. 
% journal: There are many more such examples.
All of these extensions ignore the weight \emph{computation time}, in contrast to the work reported here.

\section{Shortest Path with Estimated Weights}\label{sec:prob}
%% graph definitions 
A standard \emph{weighted digraph} is a tuple $(V, E, c)$, where $V$ is a set of vertices, $E$ is a set of edges, s.t. $e=(v_i,v_j) \in E$ iff there exists an edge from $v_i$ to $v_j$, and $c:E \to \mathbb{R}^+$ is a cost (weight) function mapping each edge to a non-negative number.
Let $v_i$ and $v_j$ be two vertices in $V$.
A \emph{path} $p=\langle e_1,\dots, e_n \rangle$ from $v_i$ to $v_j$ is a sequence of edges $e_k=(v_{q_k},v_{q_{k+1}})$ s.t. $k\in [1,n]$, $v_i=v_{q_1}$, and $v_j=v_{q_{n+1}}$.
The cost of a path $p$ is then defined to be $c(p):=\sum_{k=1}^{n} c(e_k)$. The \emph{Goal-Directed Single-Source Shortest Path}  ($GDS^3P$) 
problem involves finding a path $\pi$ from a start vertex to a goal vertex, with minimal $c(\pi)$, denoted as $C^*$.

%% cost estimators definitions
We now replace the cost function $c$ by an estimator-generating function $\Theta$, which for every
edge $e$ yields a sequence of estimation procedures, each providing a lower and upper bound on the weight
of the edge (Def.~\ref{def:theta}). The procedures are ordered by \textit{increasing running times} and assumed to \emph{yield increasingly tightening bounds}.
% journal: see remark
\begin{comment}
\begin{remark}\label{rem:info}
	Def.~\ref{def:theta} implies that each estimator can only add information, which is obtained by keeping the tightest bounds  when more than one estimator per edge is utilized.\todo{Remove?}
\end{remark}
\end{comment}

\begin{definition}\label{def:theta}
	A \emph{\bf cost estimators function} for a set of edges $E$, denoted as $\Theta$, maps every edge $e \in E$ to a finite and non-empty sequence of \emph{weight estimation procedures}, 
	\begin{equation}\label{eq:Theta}
	\Theta(e):=(\theta_e^1,\dots,\theta_e^{k(e)}), k(e) \in \mathbb{N},
	\end{equation}
	where {\bf estimator} $\theta_e^i$, if applied, returns lower- and upper- bounds $(l_e^i, u_e^i)$ on $c(e)$, such that $0\leq l_e^i \leq c(e) \leq u_e^i < \infty$). % journal: was \leq\infty 	
	$\Theta(e)$ is ordered by the increasing running time of $\theta_e^i$, and the bounds monotonically tighten, i.e., $[l_e^j, u_e^j] \subseteq  [l_e^i, u_e^i]$ for all $i<j$.
	
% Journal	By definition, $l_e^i\in\mathbb{R}^+\cup\{0\}$, $u_e^i\in\mathbb{R}^+\cup\{\infty\}$.
\end{definition}

%% derived notations
\noindent 
This allows us to define \emph{estimated weighted digraphs}:
\begin{definition}\label{def:est-graph}
	An \emph{{\bf estimated weighted digraph}} is a tuple $G=(V,E,\Theta)$, where $V,E$ are sets	 of vertices and edges, resp., and $\Theta$ is a {\bf cost estimators function} for $E$.
\end{definition} 

A path $p = \langle e_1,...,e_n \rangle$ can now be characterized by the accumulated lower- or upper- bounds on the edges,
resulting from the application of \emph{some} weight estimators (Def.~\ref{def:path-bounds}):

\begin{definition}\label{def:path-bounds}  
Let $\Phi({e})$  be a non-empty subset of estimators from the sequence $\Theta(e)$, for an edge $e$.  
%
%	$\Theta$ is said to be {\bf proper} if the 
%	 following hold for any edge $e \in E$:
%	\begin{enumerate}
	%		\item For any $i$ the estimator $\theta_e^i$ returns at least one finite bound, i.e., $l_e^i \ge 0$, or $u_e^i < \infty$ or both.
	%		\item 
	We denote the \emph{tightest} bounds on $c(e)$, over all estimators in $\Phi(e)$, as $l_{\Phi(e)}$ (maximum lower bound) and $u_{\Phi(e)}$ (minimum upper bound):
	\begin{equation}\label{eq:tight_edge_bounds}
		\begin{split}
			%			&l_e^{\Theta(e)}:=\max \{l_e^1,\dots,l_e^{k(e)}\} \\ % ]]\ge 0,\\  %TODO: This by definition is true
			%			&u_e^{\Theta(e)}:=\min \{u_e^1,\dots,u_e^{k(e)}\} \\ % < \infty. %TODO: This may not be true, but why deal with the edge case?
%			(l_e^{\Phi(e)}, u_e^{\Phi(e)}):=(\max \{l_e^1,\dots,l_e^{k(e)}\}, \min \{u_e^1,\dots,u_e^{k(e)}\})
			&l_{\Phi(e)}:=\max\{l_e^i | \theta_e^i=(l_e^i, u_e^i)\in\Phi(e)\}\\
			&u_{\Phi(e)}:=\min\{u_e^j | \theta_e^j=(l_e^j, u_e^j)\in\Phi(e)\}\\
		\end{split}
	\end{equation}
	%	\end{enumerate}

For a path $p$, let  $\Phi(p) := \bigcup_{e\in p} \Phi({e})$. The \emph{path lower bound} and \emph{path upper bound} of $p$ w.r.t. $\Phi(p)$ follow, respectively, from the tightest edge bounds defined above.
\begin{equation}\label{eq:tighe_path_bounds}
l_{\Phi(p)}:=\sum_{i=1}^{n}l_{\Phi({e_i})},~~~~u_{\Phi(p)}:=\sum_{i=1}^{n}u_{\Phi({e_i})}
\end{equation}
% allowed by $\Phi$, i.e., $\Phi_{e_i}:=\Theta(e_i) \cap \Phi$.
%
%These bounds on the path are only defined for cases with non-empty $\Phi_{e_i}$ for every edge $e_i$ in the path.
%
We denote by $\Phi^*(p)$ the \emph{maximal size} $\Phi(p)$, which includes \emph{\bf all} estimators for edges in $p$. 
\end{definition}

%% problem statement
\noindent
Estimated weighted digraphs and their path bounds generalize the familiar weighted digraphs, which are 
a special case where for every edge $e$, there is a single estimation procedure $\theta_e^1=(c(e),c(e))$ with lower and upper bounds that are equal to the weight $c(e)$. In this special case, a shortest tightly-bounded path $\pi$ in the graph is a solution to a $GDS^3P$ problem.
However, in the general case, multiple estimators exist, and we are not guaranteed that every weight can be estimated precisely, even if all estimators for it are used. Thus, several variants of the shortest path problem exist, which correspond to the tightest bounds for the shortest path.

\begin{comment}
 The tightest bounds for the cost of a path $\pi$ then converge to the standard path cost $c(\pi)$. In this special case, we may then state that $\pi$ is a {\bf $\mathcal{B}$-admissible shortest path} if $c(\pi)$ is bounded by a suboptimality factor $\mathcal{B}$, i.e., 
\begin{equation}\label{eq:suboptimality}
	c(\pi) \leq c^* \times \mathcal{B}
\end{equation}
where $c^*$ is the cost of the shortest path, a solution to a  If $\mathcal{B}=1$, then $\pi$ is a
shortest path.

However, in the general case, the cost $c(\pi)$ of a path $\pi$ may not be known precisely, and thus Inequality~\ref{eq:suboptimality} cannot be shown 
directly. Instead, as $c(\pi) \leq u_{\Phi^*(\pi)}$, we may prove that $\pi$ is $\mathcal{B}$-admissible by showing that $u_{\Phi^*(\pi)} \leq c^* \times \mathcal{B}$.  Still, the optimal cost $c^*$ is also unknown, so we instead compare to $l^*$, the tightest lower bound on the  cost of the shortest 
path (see below). Necessarily, $l^*\leq c^*$, thus showing
\begin{equation}\label{eq:admissibility}
	u_{\Phi^*(\pi)} \leq l^* \times \mathcal{B}
\end{equation}
is sufficient to prove that $\pi$ is $\mathcal{B}$-admissible.

In other words, the key step in identifying $\mathcal{B}$-admissible paths with estimated costs (which, for $\mathcal{B}=1$ are shortest paths)  is finding the tightest lower bound on the cost of the shortest path, $l^*$.  To do this, we
re-define the familiar $GDS^3P$ problem, so that we search for the shortest-path tightest lower bound.
\end{comment}

We focus on the \emph{shortest path tighest lower bound} (SLB) problem (Prob.~\ref{prob:l}). This problem deals with determining
a path that achieves $L^*$---the optimal tightest lower bound on the cost of the shortest path.

\begin{problem}[{\bf SLB}, finding $L^*$]\label{prob:l}
Let $P=(G,v_{s},V_{g})$, where $G$ is an estimated weighted digraph with cost estimators functions $\Theta$, $v_{s} \in V$ is the start (source) vertex and $V_{g} \subset V$ is a set of goal vertices. The \emph{\bf Shortest path tightest Lower Bound} problem (SLB) is to find a path $\pi$ %(called a \emph{\bf solution}) 
from  $v_{s}$ to any goal vertex $v \in V_{g}$, such that $\pi$ has the lowest tightest lower bound of any path from $v_s$ to $v\in V_{g}$, w.r.t. $\Theta$, i.e., $l(\pi)=L^*$ with
\begin{equation}\label{eq:l*}
	L^*:=\min_{\pi'} \{ l_{\Phi^*(\pi')}~|~\pi'~\text{is a path from}~v_s~\text{to}~v\in V_g\}.
\end{equation}
% TODO: we abuse the notation l_\Theta, as l_\Phi was defined for a set \Phi, and \Theta is not a set.

%A {\bf solution} for $P$ is a path $\pi$ from the initial vertex.
%A {\bf shortest path} for $P$ is a solution $\pi$ with minimum cost $c^*$, i.e.,
%\begin{equation}\label{eq:c*}
%c(\pi)=c^*:=\min \{ c(\pi')~|~\pi'~\text{is a solution for}~P \}.
%\end{equation} 
\end{problem}
%TODO: Would be better to say $\pi = \argmin_{\pi'} L(\pi') where L is a path cost function in the tuple (G,v,..., L)
% While the definition of GSM problems explicitly leaves the definition of \emph{shortest} open, it impl

\begin{comment}
%Three definitions of \emph{shortest} lead to separate problems.
\begin{problem}[Best-Case Shortest Path]\label{prob:l}
	Given a GSM problem $P$, find a {\bf shortest path lower bound} solution $\pi$ with tightest 
% Journal:  If no solution exists return $\pi=\emptyset, l^*=\infty$.
\end{problem}
\end{comment}

The use of the $\min$ operator may seem counter-intuitive, as typically the tightest lower bound would be the maximal of all lower bounds.
Indeed, ideally, we should use $l_{\Phi^*(\pi^*)}$, the tightest (maximal) lower bound of the shortest path $\pi^*$. However,
$\pi^*$ is unknown (as the cost function itself is unknown). Thus, instead we have to use $L^*$, the \emph{minimal} tightest lowest bound of \emph{any} path that leads from $v_s$ to a goal vertex. Necessarily, the use of $L^*$ bounds $l_{\Phi^*(\pi^*)}$ from below, % so it is valid for testing $\mathcal{B}$-admissibility, 
and on the other hand it is the best (maximal) lower bound we may use, when the true edge costs are unknown. 
%We remind the reader that in Eq.~\ref{eq:admissibility} we are using $l^*$ to compare against the upper-bound $u_{\Theta}(\pi)$, to check for the $\mathcal{B}$-admissibility of $\pi$. A lower $l^*$ is more effective.

The SLB problem (Problem~\ref{prob:l}) is a generalization of the standard shortest path problem $GDS^3P$ (Thm.~\ref{thm:generality}), and thus its complexity is at least that of $GDS^3P$. 
\begin{theorem}\label{thm:generality}
	Problem~\ref{prob:l} %,~\ref{prob:u} and \ref{prob:B} are 
	generalizes $GDS^3P$ problems.
\end{theorem}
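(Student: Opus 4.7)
The plan is to exhibit a straightforward polynomial-time reduction from $GDS^3P$ to SLB, showing that every $GDS^3P$ instance can be cast as an SLB instance whose optimal solution coincides with the shortest path in the original graph. This establishes that SLB is ``at least as general'' as $GDS^3P$ in the sense required.

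First, I would take an arbitrary $GDS^3P$ instance, a weighted digraph $(V,E,c)$ together with a start vertex $v_s$ and goal set $V_g$, and construct a corresponding SLB instance $(G',v_s,V_g)$ with $G' = (V,E,\Theta)$, where for every edge $e \in E$ we set
\begin{equation}
\Theta(e) := (\theta_e^1), \qquad \theta_e^1 := (c(e), c(e)).
\end{equation}
This is a legitimate cost estimators function in the sense of Def.~\ref{def:theta}: each sequence is finite and non-empty, the lower bound equals the upper bound and both equal $c(e)$, and the tightening condition is vacuously satisfied since there is only one estimator per edge.

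Next I would verify that the two problem formulations produce identical optima on this instance. For any path $p = \langle e_1,\dots,e_n\rangle$, the only non-empty subset of $\Theta(e_i)$ is $\Phi^*(e_i) = \{\theta_{e_i}^1\}$, so by Eq.~\ref{eq:tight_edge_bounds} we have $l_{\Phi^*(e_i)} = c(e_i)$, and hence by Eq.~\ref{eq:tighe_path_bounds},
\begin{equation}
l_{\Phi^*(p)} = \sum_{i=1}^{n} c(e_i) = c(p).
\end{equation}
Substituting this into the definition of $L^*$ in Eq.~\ref{eq:l*} gives
\begin{equation}
L^* = \min_{\pi'} \{ c(\pi') \mid \pi'\ \text{is a path from}\ v_s\ \text{to}\ v \in V_g \} = C^*,
\end{equation}
which is exactly the cost of a shortest path in the original $GDS^3P$ instance. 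Therefore, any optimal SLB solution $\pi$ satisfies $c(\pi) = C^*$, i.e., it is a valid shortest path for the $GDS^3P$ instance, and conversely any $GDS^3P$ optimum is an SLB optimum.

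There is essentially no technical obstacle here; the proof is a one-line reduction plus a verification of bound arithmetic. The main care-point is to argue that the mapping respects Def.~\ref{def:theta} (degenerate single-element sequence with equal bounds), and to note that since the reduction is identity on $V,E,v_s,V_g$ and constant-time on edges, the complexity of SLB is at least that of $GDS^3P$.
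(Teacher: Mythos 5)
Your proposal is correct and follows essentially the same route as the paper's own proof: encode each edge with a single degenerate estimator $\theta_e^1=(c(e),c(e))$ so that $L^*=C^*$ and an SLB optimum is a shortest path. You merely spell out the bound arithmetic in more detail than the paper does.
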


\begin{proof}
	We show that \emph{any} standard $GDS^3P$ problem can be formulated as a special case of SLB. In this special case, each edge has one estimator (namely,~$k(e)=1$ for every~$e$), that returns the exact cost (i.e.,~$l_e^1 = c(e) = u_e^1$), as this implies $L^*=C^*$. % and $\mathcal{B}^*=1$ (in the case of $l^*=u^*$ we set $\mathcal{B}^*=1$ even if $c^*=0$ as there is no uncertainty at all). 
	A solution for an SLB instance as described above has lower bound $L^*$ and will therefore have cost $C^*$, hence by definition it is a shortest path.
\end{proof}

The solution to an SLB problem is important as generally exact edge costs may not be known and then $L^*$ provides a tight lower bound for the cost of \emph{any} solution, thus quantifying cost uncertainty. 
Yet it has additional uses.  For example, $L^*$ is key in determining optimal and bounded-suboptimal shortest path solutions.  
To see this, we recall the definition of admissible solutions: a solution $\pi$ to a $GDS^3P$ problem is said to be a {\bf $\mathcal{B}$-admissible shortest path} if $c(\pi)$ is bounded by a suboptimality factor $\mathcal{B}$, i.e., 
\begin{equation}\label{eq:suboptimality}
c(\pi) \leq C^* \times \mathcal{B}.
\end{equation}
If $\mathcal{B}=1$, then $\pi$ is a shortest path.

In estimated weighted digraphs the cost $c(\pi)$ of a path $\pi$ is not known precisely (in the general case), and thus Inequality~\ref{eq:suboptimality} cannot be shown directly. Instead, as $c(\pi) \leq u_{\Phi^*(\pi)}$ holds, we may prove that $\pi$ is $\mathcal{B}$-admissible by showing that $u_{\Phi^*(\pi)} \leq C^* \times \mathcal{B}$.  Still, the optimal cost $C^*$ is also unknown, so we instead compare to $L^*$ (the SLB solution). Necessarily, $L^*\leq C^*$ holds, thus showing
\begin{equation}\label{eq:admissibility2}
	u_{\Phi^*(\pi)} \leq L^* \times \mathcal{B}
\end{equation}
is sufficient to prove that $\pi$ is $\mathcal{B}$-admissible (see Example~\ref{example:graph}).
%illustrates the components of Inequality~\ref{eq:admissibility}. 
Solving SLB is therefore critical to identifying $\mathcal{B}$-admissible paths with estimated costs (which, for $\mathcal{B}=1$ are shortest paths).  

%The next section presents search algorithms for solving SLB. 

%\input{optimal-b}

\begin{example}\label{example:graph}
	Consider an estimated weighted digraph $G=(V,E,\Theta)$, with $V=\{v_0,v_1,v_2,v_3,v_4\}$, and $E=\{e_{01}, e_{02}, e_{14}, e_{21}, e_{23}, e_{24}\}$ (see Fig.~\ref{plot:example1}).
	Here, $\Theta$ is defined by the following estimators: For edge $e_{01}$: $\theta_{e_{01}}^1=(4,4)$. For edge $e_{02}$,  $\theta_{e_{02}}^1=(2,6)$, and $\theta_{e_{02}}^2=(3,5)$.
	For edge $e_{14}$, $\theta_{e_{14}}^1=(1,10)$, $\theta_{e_{14}}^2=(4,6)$. For edge $e_{21}$, $\theta_{e_{21}}^1=(2,3)$, $\theta_{e_{21}}^2=(3,3)$. For edge $e_{23}$, $\theta_{e_{23}}^1=(5,9)$, $\theta_{e_{23}}^2=(7,8)$. Finally, for edge $e_{24}$, $\theta_{e_{24}}^1=(4,6)$.
	Additionally, the true edge costs have the following values: $c_{01}=4, c_{02}=4, c_{14}=5, c_{21}=3, c_{23}=7$ and $c_{24}=6$.

	\begin{figure}[t]
		\centering
		\includegraphics[width=1\columnwidth]{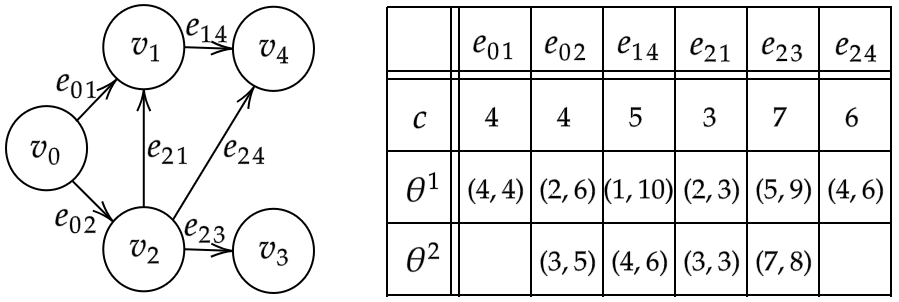}
		\caption{Left: Digraph of Example~\ref{example:graph}. Right: costs and estimates.}
		\label{plot:example1}
	\end{figure}
	
	Given the graph above, we may define the SLB problem $P=(G,v_s,V_g)$ with $v_s=v_0$ and $V_g=\{v_3,v_4\}$, i.e., searching for paths from $v_0$ to either $v_3$, or $v_4$. 
	Then, the unknown optimal cost is $C^*=c(\pi^*)=c_{01}+c_{14}=9$ with $\pi^*=\langle e_{01},e_{14} \rangle$, and the tightest lower bound for $C^*$ is $L^*=l_{\Phi^*(\pi_1)}=l_{02}^2+l_{24}^1=7$ with $\pi_1=\langle e_{02},e_{24} \rangle$ (the SLB solution).
	Then, considering the solution $\pi_2=\langle e_{02},e_{23} \rangle$, one can obtain its tightest upper bound $u_{\Phi^*(\pi_2)}=u_{02}^2+u_{23}^2=13$ and use it to get its admissibility factor $\mathcal{B}(\pi_2)=u_{\Phi^*(\pi_2)}/L^*=13/7$.
\end{example}

%The next section discusses search algorithms for solving SLB. 

\section{Algorithms for SLB}\label{sec:algs}
We present two algorithms for solving the SLB problem. Both  aim at reducing the number of expensive estimators used. The first algorithm, \beauty (Branch\&bound Estimation Applied in \ucs To Yield bottom, Alg.~\ref{alg:beauty}), extends 
\ucs to  dynamically apply cost estimators during a best-first search w.r.t. lower bounds of edge costs.
The second algorithm, \abeauty (Anytime \beauty, Alg.~\ref{alg:abeauty}) uses \beauty in iterations,
such that bounds established in one iteration are used to focus the search in the next, monotonically improving the solution. Both algorithms are proved correct and complete.

\subsection{The First Algorithm: \beauty}\label{subsec:beauty}
Algorithm~\ref{alg:beauty} receives an SLB problem instance and two hyper-parameters $l_{est}, l_{prune}$. For simplicity we will first describe a {\em base case} where $l_{est}, l_{prune}$ are both set to $\infty$, and therefore have no effect and can be ignored. The relevant lines using $l_{est}$ and $l_{prune}$ are colored in blue (Lines 17--19) and should be ignored for now. We will come back to these parameters later.

%\subsubsection{Base Setting}
\paragraph{Base Setting.} \beauty is structurally similar to \ucs. It activates a best-first search process using the standard OPEN and CLOSED lists. Nodes $n$ in OPEN are prioritized by $g_l(n)$ which is, in the base case, always equal to the optimal lower bound to node $n$ along the best known path (similar to using $g(n)$ for ordering nodes in \ucs in regular graphs, which is done according to optimal cost). The {\em best} such node $n$ is chosen for expansion in Line 4, and its successors are added in the loop of Lines 10--23. The main change  of \beauty over \ucs is in the {\em duplicate detection} mechanism performed  when evaluating the cost of a new edge $e$ that connects $n$ to its successor $s$. In \ucs, the exact edge cost $c(e)$ is immediately obtained and used to update the path cost that ends in $s$. In \beauty, we iterate over the different estimators $\theta_e^i$ for edge $e$ (Lines 14--16). In each iteration we set $\tilde{g}_l$ to be the lower bound for the path to $s$ given the current estimator (Line 16). Now such a path can be already pruned earlier if its current lower bound (using the current estimator) will not improve the best known path to $s$ ($g_l(s)$). In that case we will not need to activate the entire set of estimators (in particular, the expensive ones). Thus, if $\tilde{g}_l \ge g_l(s)$, the while statement (Line 14)  ends. Then, ordinary duplicate detection is performed in Lines 19--23. See Example~\ref{example:beauty} for a demonstration of using \beauty in its base setting.

%\subsubsection{Enhanced Setting}
\paragraph{Enhanced Setting.} We now consider the enhanced setting where $l_{est}, l_{prune}$ are set to some constant values (not $\infty$). 
First, $l_{est}$ is used as an upper bound for activating the series of estimators. When a node $n$ has a path lower bound $>l_{est}$ then we no longer activate the series of estimators and only apply the first (cheapest) estimator for edges after $n$ (including the edge to $n$). This is done in Lines 17--18 where we break the loop that further activates estimators on the current edge. 
Second, $l_{prune}$ is used as an upper bound to prune (and not add to OPEN) any node with lower bound $>l_{prune}$ (in a similar manner to {\em bounded cost search}~\cite{SternFBPSG14}). This is done in Line 19.
% in a similar manner to {\em bounded cost search} (BCS)~\cite{SternFBPSG14}. In BCS we are given a constant $C$ and the task is to find a solution with cost $\leq C$. In \beauty, $l_{prune}$ is used as a threshold to prune (and not add to OPEN) any node with lower bound $>l_{prune}$. This is done in Line 19. $l_{prune}$ may be practically used if we already have a path whose best known estimate already equals $l_{prune}$. For example, it is used in the anytime algorithm (\abeauty) discussed below where we have a decreasing series of such upper bounds.
%$l_{est}$ is used as an upper bound for activating the series of estimators. When a node $n$ has a path lower bound $>l_{est}$ then we no longer activate the series of estimators and only apply the first (cheapest) estimator for edges after $n$ (including the edge to $n$). This is done in Lines 17--18 where we break the loop that further activates estimators on the current edge. 
%$l_{est}$ may be practically used if we want to limit the amount of activating costly estimators. It is also used in the anytime algorithm (\abeauty) discussed below where we have a increasing series of such upper bounds.

The purpose of using $L^* \le l_{est}<\infty$ is to avoid applications of redundant (and expensive) estimators. 
Similarly, the purpose of using $L^* \le l_{prune}<\infty$ is to decrease the size of OPEN, which implies less insertion operations and cheaper insert/delete operations.
But since $L^*$ is unknown, setting these hyper-parameters to meaningful values requires prior information.
Practically, such information can be achieved by obtaining a suboptimal solution with $l \ge L^*$, and using it to set $l_{est}=l_{prune}=l$.
This idea is implemented in the anytime algorithm (\abeauty) discussed below.

%\subsubsection{Goal Test and \beautyps}
\paragraph{Goal Test and the Post-Search Procedure \beautyps.} When a solution $\pi$ is found by the $Goal$ function (Line 5), with the path lower bound $l(\pi)$, \beauty % (line~\ref{line:callps}) 
calls \beautyps (post-search procedure, Proc.~\ref{alg:beautyps} below) to iterate over the edges of $\pi$ and tighten the estimations whenever possible, to produce the tightest lower bound $\bar{l}^*$ for $\pi$.
If~$\bar{l}^*=l(\pi)$, namely the path bounds were already tight before \beautyps, then it determines that $\pi$ is optimal and sets $Opt\gets true$. Note that in the base setting when a solution is found it is always already tightly estimated before \beautyps, so no further estimators are applied and $Opt\gets true$.
\beautyps returns $Opt,\underline{l}^*=l(\pi)$ and $\bar{l}^*$, which are then returned by \beauty together with $\pi$ (generated by a path-reconstruction function $trace$).

%Once the goal has been found (Line 5, $Goal(n)$) with its path $\pi$ (calculated with $trace(n)$) then \beautyps is activated to post-process that path. Here we iterate over all edges of $\pi$ and activate the {\em best} available estimator for edges where we did not activate all estimators. This can only occur in the enhanced setting. Such edges are either the last edge (where estimators we pruned to the duplicate detection pruning mechanism of the basic setting) or other edges (due to the pruning of $l_{est}$). 
%\beautyps is a reminiscent of {\em post-smoothing}  or {\em string-pulling} which are done in {\em any-angle pathfinding} and in Robotics~\cite{Theta}. [[AF: the last sentence can be deleted if you so choose.]]

%[[ARIEL EDNED TEXT HERE. Feel free to use/not use any of my text]]

%\subsection{Shortest Path Lower Bound (Problem~\ref{prob:l})}\label{subsec:solving_l}
%\subsection{Beauty}\label{subsec:beauty}
%\paragraph{Algorithm~\ref{alg:beauty}.} \beauty receives an SLB problem instance and the hyper-parameters $l_{est}, l_{prune}$. It works in two stages: First, it utilizes as many estimators as needed whenever it encounters a new edge, in order to determine a best path up to value $l_{est}$. Then, it continues the search with minimal estimations until a solution is found, pruning any path with accumulated lower bound cost greater than $l_{prune}$.  

\begin{algorithm}[t]
	\caption{\beauty}
	\label{alg:beauty}
	\textbf{Input}: Problem~$P=(G,\Theta,v_{s},V_{g})$\\
	\textbf{Parameter}: Thresholds $l_{est}, l_{prune}$\\
	\textbf{Output}: Path $\pi$, $Opt$, bounds $\underline{l}^*, \bar{l}^*$
	\begin{algorithmic}[1] 
		\STATE $g_l(s_0) \gets 0$; OPEN $\gets \emptyset$; CLOSED $\gets \emptyset$
		\STATE Insert $s_0$ into OPEN with $g_l(s_0)$
		\WHILE {OPEN $\neq \emptyset$}
		\STATE $n \gets$ pop node $n$ from OPEN with minimal $g_l(n)$
		\IF {$Goal(n)$}\label{line:goalfound}
		\STATE $l(\pi) \gets g_l(n)$
		\STATE $Opt, \underline{l}^*, \bar{l}^* \gets$ \beautyps\label{line:callps}
		\RETURN {$trace(n), Opt, \underline{l}^*, \bar{l}^*$}
		\ENDIF
		\STATE Insert $n$ into CLOSED 
		\FOR {\textbf{each} successor $s$ of $n$}
		\IF {$s$ not in OPEN $\cup$ CLOSED}
		\STATE $g_l(s) \gets \infty$
		\ENDIF
		\STATE $\tilde{g}_l \gets g_l(n)$
%		\STATE $\theta \gets$ \get$(e=(n,s))$
		\WHILE {$\tilde{g}_l< g_l(s)~\AND$ estimators remain for $e=(n,s)$}
		\STATE $l(e) \gets$ Apply next estimator for $e$
%		\STATE Cache $l$ for $e$
		\STATE $\tilde{g}_l \gets g_l(n) + l(e)$
		{ \color{blue} \IF { $\tilde{g}_l > l_{est}$ }
		\STATE break \COMMENT{exit while loop}
		\ENDIF}
%		\STATE $\theta \gets$ \get$(e)$
		\ENDWHILE
		\IF {$\tilde{g}_l < g_l(s)$ {\color{blue} $\AND~\tilde{g}_l \leq l_{prune}$}}
		\STATE $g_l(s) \gets \tilde{g}_l$
		\IF {$s$ in OPEN}
		\STATE Remove $s$ from OPEN
		\ENDIF
		\STATE Insert $s$ into OPEN with $g_l(s)$
		\ENDIF
		\ENDFOR
		\ENDWHILE
		\RETURN {$\emptyset, false, \infty, \infty$}
	\end{algorithmic}
\end{algorithm}

\floatname{algorithm}{Procedure}
\begin{algorithm}[tbh]
	\caption{\beautyps}
	\label{alg:beautyps}
	\textbf{Input}: \beauty's inputs and variables\\  
	\textbf{Output}: $Opt$, bounds $\underline{l}^*, \bar{l}^*$
	\begin{algorithmic}[1] 
		\STATE $Opt \gets true;~\underline{l}^* \gets l(\pi)$
		\FOR {\textbf{each} edge $e$ in $\pi$}
%		\STATE $\theta \gets$ \get$(e)$
		\IF {estimators remain for $e$}
		\STATE $l \gets$ Apply the best estimator for $e$ 
		\STATE $l(\pi) \gets l(\pi) + l - l(e)$ 
		\STATE $l(e) \gets l$
%		\STATE $\theta \gets$ \get$(e)$
		\ENDIF
		\ENDFOR
		\IF {$l(\pi) > \underline{l}^*$}
		\STATE $Opt \gets false$
		\ENDIF
		\STATE $\bar{l}^* \gets l(\pi)$
		\RETURN {$Opt,~\underline{l}^*,~\bar{l}^*$}
	\end{algorithmic}
\end{algorithm}
\floatname{algorithm}{Algorithm}

%
%Except for the usage of $l_{est}, l_{prune}$ and \beautyps, \beauty is structurally similar to \ucs. The data structures OPEN and CLOSED are priority queues, and $g_l$ is a mapping analogous to $g$ in \ucs. The primary modification is the addition of an estimation loop that takes place in lines 11--21 (including initialization). 

Depending on the hyper-parameters $l_{prune}, l_{est}$, \beauty is \textit{complete} (Lemma~\ref{lemma:conditional_complete_l}), \textit{sound} (Lemma~\ref{lemma:bounds_for_L^*}), and \textit{optimal} (Lemma~\ref{lemma:conditional_optimality_l}).

\begin{lemma}[Conditional Completeness Prob.~\ref{prob:l}]\label{lemma:conditional_complete_l}
	\beauty, provided with $l_{prune} \ge L^*$, is complete.
\end{lemma}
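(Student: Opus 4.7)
The plan is to adapt the standard completeness argument for \ucs to our setting: since edge lower bounds are non-negative, \beauty inherits the termination behavior of \ucs, and it suffices to exhibit a witness path whose every prefix survives the $l_{prune}$ test on Line~19. First I would dispense with termination. The graph is finite, every call in Line~16 draws from the finite sequence $\Theta(e)$, and each successor $s$ is only re-inserted into OPEN when $\tilde{g}_l < g_l(s)$ (Line~19), so $g_l(s)$ strictly decreases with each update and takes values in a finite set of partial-sum combinations of estimator lower bounds; thus OPEN is updated finitely many times and either empties or returns at Line~\ref{line:goalfound}.

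Next I would construct the witness. Assuming an SLB solution exists (otherwise the statement is vacuous), let $\pi^{*}=\langle e_1,\dots,e_m\rangle$ be one with $l_{\Phi^{*}(\pi^{*})}=L^{*}$, ending in some $v_g\in V_g$. The key monotonicity fact is that for any prefix $\pi^{*}_k=\langle e_1,\dots,e_k\rangle$ and \emph{any} non-empty subset $\Phi$ of estimators applied along it,
\begin{equation*}
l_{\Phi(\pi^{*}_k)} \;=\; \sum_{i=1}^{k} l_{\Phi(e_i)} \;\le\; \sum_{i=1}^{m} l_{\Phi^{*}(e_i)} \;=\; L^{*},
\end{equation*}
because every $l_{\Phi(e_i)}\le l_{\Phi^{*}(e_i)}$ (tightening of bounds, Def.~\ref{def:theta}), and all lower bounds are non-negative so dropping the suffix $e_{k+1},\dots,e_m$ cannot increase the sum. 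In particular, no lower bound that \beauty ever computes for a prefix of $\pi^{*}$ exceeds $L^{*}\le l_{prune}$.

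Using this, I would prove by induction on $k$ that the $(k{+}1)$-th vertex on $\pi^{*}$ is eventually inserted into OPEN with $g_l$-value at most $L^{*}$. The base case $k=0$ is Line~2. For the step, when the $k$-th vertex is popped (as it must be, since its $g_l$ value lies in the finite, bounded set of values $\le L^{*}$ that are popped in non-decreasing order, just as in \ucs), the loop of Lines~11--23 processes the edge $e_{k+1}$, obtaining some $\tilde{g}_l\le L^{*}\le l_{prune}$, so Line~19 admits the insertion. The blue short-circuit at Lines~17--18 only stops further refinement of the estimate on $e_{k+1}$ --- it never discards $e_{k+1}$ or its already-computed lower bound --- so it has no effect on admission. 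Iterating, the goal vertex $v_g$ is eventually popped and Line~\ref{line:goalfound} triggers, returning a path.

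The main obstacle I expect is the subtle bookkeeping around re-insertion: one needs to argue that the sequence of $g_l$-values popped from OPEN is non-decreasing (the usual \ucs invariant, which carries over verbatim because lower bounds are non-negative) so that the inductive ``eventually pops'' step is justified in finite time, and that the value $g_l(v_g)$ at which the goal test fires is well-defined (in particular, $\le L^{*}$). Once this is set up, the completeness conclusion is immediate from the combination of termination and the witness induction.
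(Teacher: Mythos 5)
Your proposal is correct and follows essentially the same route as the paper's proof: both rest on the observations that partially-applied estimators can only under-estimate the tightest lower bound (so every prefix of the witness path is evaluated at most $L^{*}\le l_{prune}$ and survives pruning), that $l_{est}$ only limits which estimators are applied and never discards an edge, and that the systematic best-first order then guarantees a goal is popped. Your version merely makes explicit (the prefix-monotonicity inequality, the induction along $\pi^{*}$, and the non-decreasing pop order) what the paper's proof asserts informally.
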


\begin{proof}
\beauty inspects nodes that are removed from OPEN by best-first order w.r.t. lower bound of path cost.
When $l_{prune} = \infty$ is satisfied, no node is pruned, so that every node encountered during the search is inserted into OPEN.
The condition $\tilde{g}_l < g_l(s)$ simply verifies that each node in OPEN points back to the best found path leading to it, but it does not prevent nodes from being inserted.
In this case completeness is assured, as the search is systematic. % journal: (similar to \ucs).

Suppose that a best-first algorithm utilizes all possible estimators per edge it encounters. Then, if a solution exists, a shortest path tightest lower bound $\pi^*$ will necessarily be returned with $L^*$.
Since applying more estimators can only increase (tighten) the lower bound for an edge, it follows that when not all possible estimators per edge are utilized, and a systematic best-first search takes place, then a solution $\pi$ for $P$ ending in a node $n$ will be found, where the key of $n$ in OPEN (the obtained lower bound), immediately before it was removed, must be lower than, or equal to, $L^*$.
This holds regardless of the value of $l_{est}$, that only affects which (and how many) estimators will be applied. Namely, the value of $l_{est}$ may affect which solution $\pi$ is found, but not the fact that such a solution will be found.
Hence, when $l_{prune} \ge L^*$ is satisfied, a solution $\pi$ is necessarily found.
\end{proof}

%\begin{remark}
%	In fact, Lemma~\ref{lemma:conditional_complete_l} can be strengthened to state completeness for $l_{prune}$ with a lower threshold than $l^*$, but for our purposes it is sufficient.
%\end{remark}

\begin{lemma}[Bounds for $L^*$]\label{lemma:bounds_for_L^*}
	\beauty, provided with $l_{prune} \ge L^*$, returns $0 \leq \underline{l}^* \leq L^* \leq \bar{l}^*$, if a solution exists for $P$.
	Furthermore, if~~$l_{est} < L^*$ also holds, then $\underline{l}^* > l_{est}$.
\end{lemma}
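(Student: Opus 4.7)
The plan is to prove the four inequalities separately, leaning on Lemma~\ref{lemma:conditional_complete_l} to know that \beauty, when $l_{prune} \ge L^*$, returns some path $\pi$ ending at a goal node $n$ with $g_l(n) = \underline{l}^*$. The bound $0 \le \underline{l}^*$ is immediate from Def.~\ref{def:theta} (every $l_e^i \ge 0$), since $g_l$ is a sum of such per-edge lower bounds. The bound $L^* \le \bar{l}^*$ is essentially by definition of $L^*$: \beautyps applies every remaining estimator to every edge of $\pi$, so $\bar{l}^* = l_{\Phi^*(\pi)}$, and then $L^* = \min_{\pi'} l_{\Phi^*(\pi')} \le l_{\Phi^*(\pi)} = \bar{l}^*$.

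For $\underline{l}^* \le L^*$, I would track a witness path $\pi^*$ achieving $L^*$ through the search. Because per-edge lower bounds are non-negative, every prefix of $\pi^*$ has tightest lower bound at most $L^*$; because applying more estimators only tightens (raises) the bound on an edge, the $g_l$ value assigned to any prefix-node of $\pi^*$ that enters OPEN is at most that prefix's tightest lower bound, and hence at most $L^*$; and because $l_{prune} \ge L^*$, no such prefix-node is ever pruned at the update step. Hence at every moment before the algorithm terminates, OPEN contains a node with key $\le L^*$. When \beauty extracts the minimum-$g_l$ goal node $n$, its key, which equals $\underline{l}^*$, is therefore at most $L^*$.

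The main obstacle is the refined second claim, that $l_{est} < L^*$ forces $\underline{l}^* > l_{est}$. I would argue by contradiction: assume $\underline{l}^* \le l_{est}$. Since edge lower bounds are non-negative, $g_l$ is monotone along $\pi$, so every node $n_i \in \pi$ satisfies $g_l(n_i) \le l_{est}$. For each edge $e_i$ of $\pi$ leading to $n_i$, I would then analyze the most recent execution of the inner estimator while-loop (Lines 14--21) that set $g_l(n_i)$ to its final value $\tilde{g}_l \le l_{est}$. That loop has exactly three possible exits; I would rule out the explicit \textbf{break} triggered by $\tilde{g}_l > l_{est}$ (which would make the installed value of $g_l(n_i)$ strictly larger than $l_{est}$, a contradiction), and the loop-guard failure $\tilde{g}_l \ge g_l(s)$ (which would prevent the subsequent update at Line 22 from happening at all, so $g_l(n_i)$ would not take its current value via this edge). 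The only remaining possibility is exhaustion of the estimator sequence for $e_i$. Consequently every estimator of every edge of $\pi$ has been applied by that point, giving $\underline{l}^* = l_{\Phi^*(\pi)} \ge L^*$, which contradicts $\underline{l}^* \le l_{est} < L^*$. The delicate point to double-check is that the per-edge applied-estimator state persists across re-visits, so that ``no estimators remain'' at the moment of the update really does mean the full $\Phi^*(e_i)$ has been applied, and hence that the edge contribution to $\underline{l}^*$ truly equals $l_{\Phi^*(e_i)}$.
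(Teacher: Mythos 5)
Your proposal is correct and follows essentially the same route as the paper's proof: the same decomposition of the chain $0 \leq \underline{l}^* \leq L^* \leq \bar{l}^*$ (non-negativity of edge lower bounds, the best-first invariant inherited from the completeness argument, and the fact that \beautyps exhausts all estimators on $\pi$ so that $L^*$'s minimality gives $L^* \leq \bar{l}^*$), and the same proof by contradiction for the second claim, concluding that $\underline{l}^* \leq l_{est} < L^*$ would force every edge of $\pi$ to be fully estimated and hence $\underline{l}^* = l_{\Phi^*(\pi)} \geq L^*$. Your explicit three-way case analysis of the inner while-loop exits, and the caveat about estimator state persisting across visits, merely make rigorous what the paper states informally.
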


\begin{proof}
The proof of Lemma~\ref{lemma:conditional_complete_l} established that when \beauty is called with $l_{prune} \ge L^*$, a solution $\pi$ will be found (when a solution exists), ending in a node $n$, where the key of $n$ in OPEN $g_l(n)$ (the obtained lower bound), immediately before it was removed, satisfies $g_l(n) \leq L^*$. Additionally, $g_l(n) \ge 0$ trivially holds, as each edge lower bound is by definition non-negative.
In line 6 of \beauty $l(\pi) \gets g_l(n)$ is set, then \beautyps is called, which sets $\underline{l}^* \gets l(\pi)$ in Line 1, and then $\underline{l}^*$ is not changed until it is returned.
\beautyps utilizes all unused estimators in the solution $\pi$, by systematically improving estimations for each edge $e$ belonging to $\pi$ using all estimators in $\Theta(e)$. 
Thus the tightest possible lower bound for $\pi$ is obtained and returned as $\bar{l}^*$.
From the optimality of $L^*$ it follows that $\bar{l}^* \ge L^*$.
To sum up, $\underline{l}^*, \bar{l}^*$, that satisfy $0 \leq \underline{l}^* \leq L^* \leq \bar{l}^*$, are returned. 

Let us now consider the case that $l_{est} < L^*$ holds in addition to $l_{prune} \ge L^*$.
Seeking a contradiction, assume that $\underline{l}^* > l_{est}$ is not necessarily satisfied.
This means that for some solution $\pi$, it holds that $\underline{l}^*\leq l_{est}$.
Recall that $\underline{l}^* = g_l(n)$ for the node $n$, which is the last node in the path implied by the solution $\pi$.
Since $l_{est} < L^*$ holds, it must be that each edge in $\pi$ has been estimated using all possible estimators before $n$ is established as a goal node, as for each node $n'$ satisfying the condition $g_l(n') \leq l_{est}$, edges included in the path leading to $n'$ are only denied tight estimation in cases where a better alternative path leading to $n'$ was already found. 
Therefore, the lower bound of $\pi$ cannot be tightened, so $\underline{l}^*=\bar{l}^*$ is satisfied, implying that $\pi$ is optimal with lower bound $L^*$.
But this means that $L^* = \underline{l}^* \leq l_{est} < L^*$. A contradiction.
Hence, $\underline{l}^* > l_{est}$.
%For \beauty called with $l_{prune} \ge l^*, l_{est} < l^*$, let us define $\epsilon:= \min \{\underline{l}^*(\pi) - l_{est}~|~\underline{l}^*(\pi)~\text{is returned for a solution}~\pi\}$.
%Then, $\underline{l}^* \ge l_{est} + \epsilon$ holds.
\end{proof}

\begin{lemma}[Conditional Optimality Prob.~\ref{prob:l}]\label{lemma:conditional_optimality_l}
	\beauty, provided with $l_{prune} \ge L^*$ and $l_{est} \ge L^*$, returns a shortest path tightest lower bound $\pi$ and $\bar{l}^*=L^*$, if a solution exists for $P$.
\end{lemma}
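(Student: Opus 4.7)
The strategy is to combine Lemma~\ref{lemma:conditional_complete_l}, which guarantees that a path $\pi$ is produced, with a Dijkstra-style optimality induction that exploits the hypothesis $l_{est}\ge L^*$. By Lemma~\ref{lemma:bounds_for_L^*} we already know $L^* \le \bar{l}^* = l_{\Phi^*(\pi)}$, so it suffices to prove the matching upper bound $l_{\Phi^*(\pi)}\le L^*$, i.e., that the returned path is SLB-optimal.

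The first step is to isolate the key consequence of the threshold hypotheses. Whenever the estimator loop at Lines~14--22 exits through the blue break at Lines~17--19, we have $\tilde{g}_l>l_{est}\ge L^*$, so any subsequent assignment at Line~19 produces $g_l(s)>L^*$. Contrapositively, every update that lowers $g_l(s)$ to a value at most $L^*$ must arise from the estimator loop having exhausted all estimators for the current edge, and therefore installs the tight lower bound $l_{\Phi^*(e)}$; the condition $l_{prune}\ge L^*$ likewise prevents the Line~19 pruning test from discarding any such sub-$L^*$ update.

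Next I would prove, by induction on the order in which nodes are popped from OPEN, the following joint invariant: for every $n$ popped with $g_l(n)\le L^*$, (i) $g_l(n)=L^*_n$, where $L^*_n:=\min\{l_{\Phi^*(p)} : p \text{ a path from } v_s \text{ to } n\}$, and (ii) the parent-pointer path from $v_s$ to $n$ has every edge fully estimated, so that its accumulated lower bound equals $g_l(n)$. The base case $n=v_s$ is immediate. For the step, the final parent $n^p$ of $n$ satisfies (i)--(ii) by induction, and by the first step the update of $g_l(n)$ across $(n^p,n)$ came from the estimator-exhausted branch, giving (ii) for $n$ and the identity $g_l(n)=L^*_{n^p}+l_{\Phi^*((n^p,n))}$; this immediately yields $g_l(n)\ge L^*_n$. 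The reverse inequality $g_l(n)\le L^*_n$ is the standard contradiction argument: a strictly cheaper path $p^*$ would contain a first not-yet-popped node $n'$, whose already-popped predecessor on $p^*$ (whose edge to $n'$ lies on a sub-$L^*$ prefix and is thus fully estimated by the first step) would have inserted $n'$ into OPEN with $g_l(n')<g_l(n)$, violating best-first order.

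Applying the invariant to the popped goal $n_g$ and combining it with $\underline{l}^*\le L^*$ from Lemma~\ref{lemma:bounds_for_L^*} yields $L^* \le L^*_{n_g} = g_l(n_g) = \underline{l}^* \le L^*$, so $g_l(n_g)=L^*$, and by (ii) the traced path $\pi$ satisfies $l_{\Phi^*(\pi)}=L^*$. Since every edge of $\pi$ is already tightly estimated before \beautyps runs, that procedure leaves $l(\pi)$ unchanged, so $\bar{l}^*=L^*$ and $\pi$ is a shortest path tightest lower bound solution. The main obstacle I foresee is the joint nature of the induction, and in particular clause~(ii): tight estimation of the whole parent-pointer path has to be carried through the three possible exits of the while loop (estimators exhausted, $\tilde{g}_l$ meeting $g_l(s)$, or the blue break), and the hypothesis $l_{est}\ge L^*$ must be invoked precisely to rule out the break in exactly the regime where an update with $g_l(s)\le L^*$ actually occurs.
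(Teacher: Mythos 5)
Your proof is correct and follows essentially the same route as the paper's: both rest on the observation that, with $l_{est},l_{prune}\ge L^*$, every $g_l$-update at or below $L^*$ must come from a fully exhausted estimator loop, so the search behaves like Dijkstra's algorithm on the tightest lower bounds in the sub-$L^*$ regime and the first goal popped carries $g_l=L^*$ with a fully estimated path. The paper states this invariant in one sentence ("the best paths, based on tightest possible estimates, with cumulative lower bounds of up to $l_{est}$ are found") and you supply the pop-order induction that justifies it, which is a welcome tightening rather than a different argument.
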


\begin{proof}
Continuing the argument made in the proof of Lemma~\ref{lemma:bounds_for_L^*}, if $l_{prune} \ge L^*$ and $l_{est} \ge L^*$ hold, then the best paths, based on tightest possible estimates, with cumulative lower bounds of up to $l_{est}$ are found, and their terminal nodes are inserted to OPEN.
In particular, the best paths up to $L^*$ (including this value) are found.
From the definition of $L^*$ it follows that there exists a solution $\pi$ with a tight lower bound equal to $L^*$.
Hence, $\pi$, or possibly another solution with the same tight lower bound, is guaranteed to be found when its corresponding goal node is removed from OPEN.
Then, $\bar{l}^*=\underline{l}^*=L^*$ together with $\pi$ are returned.
\end{proof}

\noindent
The implication of Lemmas~\ref{lemma:conditional_complete_l}--\ref{lemma:conditional_optimality_l} is that \emph{SLB problems can be solved optimally using \beauty} by setting $l_{prune}$ and $l_{est}$ to be greater than, or equal to, $L^*$, which can always be achieved by setting them to $\infty$, as Example~\ref{example:beauty} shows.

\begin{example}\label{example:beauty}
	Consider calling \beauty with $l_{est}=l_{prune}=\infty$ (i.e., base setting) on $P$ from Example~\ref{example:graph}. Tracing its run, at the first iteration of the while loop it invokes $\theta_{e_{01}}^1, \theta_{e_{02}}^1$ and $\theta_{e_{02}}^2$ and inserts $v_1, v_2$ to OPEN with keys $4, 3$. At the second iteration $v_2$ is removed from OPEN, $\theta_{e_{21}}^1, \theta_{e_{23}}^1, \theta_{e_{23}}^2, \theta_{e_{24}}^1$ are invoked, and $v_3, v_4$ are inserted to OPEN with keys $10, 7$. At the third iteration $v_1$ is removed from OPEN, $\theta_{e_{14}}^1$ and $\theta_{e_{14}}^2$ are invoked. At the forth iteration $v_4$ is removed from OPEN and \beauty returns $\langle e_{02},e_{24} \rangle, true, 7, 7$.
\end{example}

However, a lower value of $l_{est}$ enables to avoid redundant estimations, where the potential savings grow as $l_{est}$ approaches $L^*$ from above. This motivates the use of \beauty in an iterative framework that gradually increases $l_{est}$ until the optimal solution is found.

\subsection{The Second Algorithm: Anytime \beauty}\label{subsec:any}

The \abeauty algorithm automates the iterative usage of \beauty with increasingly tightened $l_{est}$ and $l_{prune}$ around $L^*$, until the optimal solution is found.
It starts with $l_{est}=0$ and $l_{prune}=\infty$, and each time \beauty terminates it returns $\underline{l}^* > l_{est}$ (Lemma~\ref{lemma:bounds_for_L^*}), which is used as $l_{est}$ in the next call. 
Similarly, the returned $\bar{l}^*$ is a finite value (when a solution exists) that always is greater than, or equal to, $L^*$ (again,  Lemma~\ref{lemma:bounds_for_L^*}). Using the lowest value of $\bar{l}^*$, $l_{prune}$ is monotonically non-increasing.

\begin{algorithm}[t]
	\caption{\abeauty}
	\label{alg:abeauty}
	\textbf{Input}: Problem~$P=(G,\Theta,v_{s},V_{g})$\\
	\textbf{Output}: Path~$\pi$, bound~$\bar{l}^*$
	\begin{algorithmic}[1] 
		\STATE $\underline{l}^* \gets 0$; $\bar{l}^* \gets \infty$; $Opt \gets false$
		\WHILE {not $Opt$}
		\STATE $\pi, Opt, \underline{l}^*, \bar{l} \gets$~\beauty$(P$, $\underline{l}^*, \bar{l}^*)$ 
		\IF {$\pi = \emptyset$}
		\RETURN {$\emptyset, \infty$}
		\ENDIF 
		\IF {$\bar{l} < \bar{l}^*$}
		\STATE $\bar{l}^* \gets\bar{l}$
		\ENDIF
		\STATE Print {$\pi$, $\underline{l}^*$, $\bar{l}^*$}
		\ENDWHILE
		\RETURN {$\pi$, $\bar{l}^*$}
	\end{algorithmic}
\end{algorithm}

The process converges in a finite number of iterations (shown below) and thus assures optimality, while gradually utilizing more estimations, that in turn support better approximations for $L^*$ (which are saved every time an improvement is achieved).  
The estimations are saved between iterations, so that it is not necessary to re-apply estimators.
Technically, this is obtained by defining the next estimator to apply to first look for a saved value and only then turn to unused estimators.
%Overall, this moderates the increase in run-time consumption (due to utilizing more estimations) between subsequent iterations.
Tightened $l_{prune}$ values decrease the size of OPEN, reducing memory consumption and run-time (due to less insertion operations, and cheaper insert/delete operations).

\begin{theorem}[Completeness, Soundness and Optimality Prob.~\ref{prob:l}]\label{thm:complete_opt_l}
	\abeauty is complete. If a solution exists for $P$, then a shortest path tightest lower bound $\pi$ and $L^*$ are returned.
\end{theorem}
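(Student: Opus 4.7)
The plan is to derive Theorem~1 as a straightforward consequence of the three \beauty lemmas, combined with an induction on \abeauty's iterations showing that the invariant $l_{prune}\ge L^*$ is preserved and that $l_{est}$ strictly increases until it meets or exceeds $L^*$.

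First, I would dispose of the no-solution case. In \abeauty's first iteration, the call is \beauty$(P,0,\infty)$. Since $l_{prune}=\infty\ge L^*$ trivially, Lemma~\ref{lemma:conditional_complete_l} guarantees that if a solution exists it will be found; the contrapositive is that \beauty returns $\emptyset$ only when no solution exists. \abeauty then returns $(\emptyset,\infty)$ in Line~5, establishing completeness.

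The second step is to maintain the loop invariant $l_{prune}\ge L^*$. Initially $l_{prune}=\infty\ge L^*$. In each iteration \abeauty sets $l_{prune}\gets\bar l^*$ only when $\bar l<\bar l^*$, and by Lemma~\ref{lemma:bounds_for_L^*} every returned $\bar l$ satisfies $\bar l\ge L^*$, so the invariant persists. Under this invariant, Lemma~\ref{lemma:bounds_for_L^*} further gives $0\le\underline l^*\le L^*\le\bar l^*$ at every iteration, and whenever the current $l_{est}<L^*$ the returned $\underline l^*$ strictly exceeds $l_{est}$. Since \abeauty passes the returned $\underline l^*$ as the next call's $l_{est}$, the sequence of $l_{est}$ values is strictly increasing as long as it stays below $L^*$.

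Third, I would argue termination. For a finite estimated weighted digraph with finite estimator sequences, the set of values $\{l_{\Phi(p)}\}$ attainable by paths from $v_s$ to nodes in $V_g$ (under any choice of $\Phi$) is finite. Since each successive $\underline l^*$ lies in this finite set and the $l_{est}$ sequence is strictly increasing while $<L^*$, after finitely many iterations $l_{est}\ge L^*$ must hold. In that iteration both hyper-parameters satisfy the hypotheses of Lemma~\ref{lemma:conditional_optimality_l}, so \beauty returns a shortest-path tightest-lower-bound $\pi$ with $\underline l^*=\bar l^*=L^*$; the test in \beautyps then sets $Opt\gets true$, the outer loop exits, and \abeauty returns $(\pi,L^*)$.

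The main obstacle is the termination argument, specifically making precise why $l_{est}$ must actually reach $L^*$ rather than climbing through infinitely many values strictly below it. The cleanest route is to note that every $\underline l^*$ returned equals $g_l(n)$ for some goal node $n$ expanded by \beauty, and every such $g_l(n)$ is a finite sum of lower bounds chosen from the finite family of estimator values appearing in $\Theta$; thus only finitely many distinct values are possible, and a strictly increasing sequence in this set must terminate. Once this is established, soundness and optimality follow immediately from the lemmas, completing the proof.
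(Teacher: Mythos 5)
Your proposal is correct and follows essentially the same route as the paper's proof: establish the loop invariants on $l_{prune}$ and $l_{est}$ via Lemmas~\ref{lemma:conditional_complete_l} and~\ref{lemma:bounds_for_L^*}, then invoke Lemma~\ref{lemma:conditional_optimality_l} once $l_{est}$ reaches $L^*$. The only cosmetic difference is in the termination step, where the paper bounds the per-iteration increase of $\underline{l}^*$ below by a constant $\delta_{min}>0$ while you argue that a strictly increasing sequence in the finite set of attainable path lower bounds must terminate---two phrasings of the same finiteness fact induced by $\Theta$.
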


\begin{proof}
\abeauty initializes $\underline{l}^*\gets 0$ and $\bar{l}^*\gets\infty$, and then enters a loop that terminates when no solution is found or when the optimal solution is found.
At each iteration of the loop, it calls \beauty with $l_{est}=\underline{l}^*$ and $l_{prune}=\bar{l}^*$.
Due to the initialization, the conditions of Lemmas~\ref{lemma:conditional_complete_l} and~\ref{lemma:bounds_for_L^*} are fulfilled in the first iteration, so that if a solution exists, a solution would be returned by \beauty, with tightened bounds, i.e., $\underline{l}^*>0$ and $L^* \leq \bar{l}^*<\infty$.
In the second iteration (if the optimal solution has yet to be found) the $\underline{l}^*$ and $\bar{l}^*$ found in the first iteration are used again as $l_{est}=\underline{l}^*$ and $l_{prune}=\bar{l}^*$ in the call for \beauty, where again the conditions for both lemmas hold.
Thus $\underline{l}^*$ is guaranteed to monotonically increase with each iteration, and $\bar{l}^*$ can either decrease (but remain at least $L^*$) or stay the same. Hence, the conditions for both lemmas are satisfied in every iteration until termination, i.e., we have established that the conditional completeness of \beauty implies regular completeness for \abeauty, and that $\bar{l}^*$ monotonically non-increases.

To show optimality, we next analyze the increase in $\underline{l}^*$ between subsequent iterations.
Denote $\delta_i:=\underline{l}^*_i-\underline{l}^*_{i-1}$, where $\underline{l}^*_i$ is the value obtained after call $i$ to \beauty.
Note that $\delta_i$ cannot be arbitrarily small values, as they exactly represent the differences between cumulative lower bounds of solutions obtained in subsequent iterations, which are limited to a finite set of values (induced by $\Theta$).
Thus, there exists a constant $\delta_{min}>0$ such $\forall i, \delta_i \ge \delta_{min} $ is satisfied.
Hence, either the optimal solution is found before $\underline{l}^*$ reaches $L^*$, or it is found right after it reaches it (Lemma~\ref{lemma:conditional_optimality_l}), which necessarily occurs after a finite number of iterations.
\end{proof}

\noindent
The proof of Thm.~\ref{thm:complete_opt_l} shows the number of iterations until convergence to optimality is unknown a-priori. Nevertheless, we can set a simple threshold either on the number of iterations or on the convergence implied by $\bar{l}^*/\underline{l}^*$. Once the threshold is crossed, setting both $l_{est}$ and $l_{prune}$ to $\bar{l}^*$ ensures the last iteration. See Example~\ref{example:any_beauty}.

\begin{example}\label{example:any_beauty}
	Consider again the SLB problem $P$ from Example~\ref{example:graph}. 
	When calling \abeauty on $P$, at the first iteration the utilized estimators are $\theta_{e_{01}}^1, \theta_{e_{02}}^1, \theta_{e_{14}}^1, \theta_{e_{14}}^2, \theta_{e_{21}}^1, \theta_{e_{23}}^1$ and $\theta_{e_{24}}^1$, where $\theta_{e_{14}}^2$ is invoked by \beautyps. The algorithm prints $\langle e_{01},e_{14} \rangle, 5, 8$.
	At the second iteration the estimator $\theta_{e_{02}}^2$ is also utilized. The algorithm prints $\langle e_{02},e_{24} \rangle, 7, 7$ and returns $\langle e_{02},e_{24} \rangle, 7$.
\end{example}

\section{Empirical Evaluation}\label{sec:empir}
The theoretical guarantees of \beauty and \abeauty touch on their optimality and completeness, but 
do not provide information as to the run-time savings they offer. We therefore empirically evaluate 
the algorithms in diverse settings, based on AI planning benchmark problems that were modified to have multiple action-cost estimators, so that these induce SLB problems. 

The set of problems was taken from
 %, is part of the main FD repository: https://www.fast-downward.org/}, which is 
a collection of IPC (International Planning Competition) benchmark instances\footnote{See \url{https://github.com/aibasel/downward-benchmarks}.}. 
Starting from the full collection, we first filtered out every domain that didn't offer support for action costs.
Then, for some of the domains we created additional problems by using different configurations of costs.
For all problems and domains, we synthesized three estimators.
Each edge $e$ with cost $c_{old}(e)$ was mapped to a new cost $c_{new}(e)$ that satisfies $c_{new}(e) \ge c_{old}(e) \times f_3$, with $f_3 > f_2 > f_1 \ge 1$, so that $l_e^1:=c_{old} \times f_1, l_e^2:=c_{old} \times f_2, l_e^3:=c_{old} \times f_3$ served as its first, second and third lower bound estimates.
To diversify the estimator sets for different edges, the parameters $f_1, f_2, f_3$ were taken from the sets $f_1 \in \{1, 2, 3\}, f_2 \in \{f_1+1, f_1+2, f_1+3\}, f_3 \in \{f_2+1\}$,
which resulted in nine different configurations.
The choice of configuration was taken according to the result of a simple hash function, that depends on $c_{old}(e)$ and a user-input seed, described as follows: 
\begin{equation}\label{eq:hash}
\text{Hash} = (c_{old}(e) + \text{seed}) \mod 9.
\end{equation}
Then, the configuration was set according to Table~\ref{table:f_config}.
Each problem was run once per seed, where the seeds where taken from the set $[0,8]$, which resulted in 9 instances per problem.
Overall, this resulted in a cumulative set of 914 problem instances, spanning 12 unique domains.
The full list of the domains and problems that were used in the experiments is detailed in~\cite{plandem}.

\begin{table}[t]
	\centering
	\begin{tabular}{r r r r r r r r r r r}
		Hash & 1 & 2 & 3 & 4 & 5 & 6 & 7 & 8 & 9 &  \\ \hline
		$f_1$ & 1 & 2 & 3 & 1 & 2 & 3 & 1 & 2 & 3 &  \\
		$f_2$ & 2 & 3 & 4 & 3 & 4 & 5 & 4 & 5 & 6 &  \\
		$f_3$ & 3 & 4 & 5 & 4 & 5 & 6 & 5 & 6 & 7 &
	\end{tabular}
	\caption{The configuration of $f_1, f_2, f_3$ in Rows 2--4 according to the hash values displayed in Row 1.}
	\label{table:f_config}
\end{table}
 
We note that the configurations depicted in Table~\ref{table:f_config} that are chosen according to the hash function of Eq.~\ref{eq:hash} guarantee that the same ground action, in different states, will have the same cost estimates.

\beauty and \abeauty were implemented as search algorithms in \emph{PlanDEM} (Planning with Dynamically Estimated Action Models~\cite{plandem}.
a C++ planner that extends Fast Downward (FD)~\cite{helmert-jair2006} (v20.06).
All experiments were run on an Intel i7-1165G7 CPU (2.8GHz), with 32GB of RAM, in Linux.
We also implemented \emph{Estimation-time Indifferent UCS} (\eiucs), a UCS
algorithm that uses the most accurate estimate on each edge it encounters, to serve as a baseline.  
For every problem instance we ran \eiucs, \beauty with $l_{est}=l_{prune}=\infty$, and two versions of \abeauty---\abeauty-2 and \abeauty-10---with maximal number of 2 and 10 iterations, resp. We emphasize that all these algorithms are guaranteed to achieve optimal solutions. We report the results from problem instances which all algorithms solved successfully, i.e., found optimal solutions, within 5 minutes.

\subsection{\beauty vs. \eiucs}\label{subsec:beauty_exp}
%\vspace{-6pt}
%\paragraph{\beauty vs. \eiucs}
%\begin{enumerate}
%	\item Discuss the basic practical differences between the two algorithms
%	\item Compare results (search and estimation efforts) for the default case of $l_{est}=l_{prune}=\infty$
%	\item Elaborate on high variance, and rule out determining factors (plan length, plan cost, search effort). What is still to be determined in the future (graph structure, relations between action costs and estimator sets)
%	\item Emphasize that \beautyps does nothing in the default case
%\end{enumerate}
We begin by contrasting \beauty and \eiucs, to examine the effectiveness of \beauty in avoiding unnecessary expensive estimations.  
\beauty is only guaranteed optimal if its two hyper-parameters, $l_{est}$, $l_{prune}$, are greater than $L^*$, which is unknown a-priori.
Thus, to ensure a fair comparison, we set $l_{est}=l_{prune}=\infty$ for all the runs of \beauty (that are not part of the anytime framework).
Using these settings, the only difference between \beauty and \eiucs is the condition $\tilde{g}_l< g_l(s)$  in the estimation loop (line 15 in Alg.~\ref{alg:beauty}) that prevents applying further estimators when an alternative path with lower $g$-value is already known. In contrast, \eiucs ignores estimator time, always computing the tightest lower bound possible for every edge.  
Hence, the two algorithms follow the exact same search mechanism (i.e., identical node expansion order), and may only differ in the numbers and types of the estimators applied. Of specific interest is the difference in \emph{expensive} third-layer estimators usage.  
Note that under this setting \beautyps has nothing to improve, as the solution path is already fully estimated. % , so it only sets $Opt \gets true$ and $\bar{l}^* \gets l(\pi)$.

We denote by $L_3$ the number of third-layer estimators applied during search. The results are summarized below: 
\begin{itemize}
%	\item The ratio $r_{L_2}:=L_2$(\beauty)$/L_2$(\eiucs) had average of 61.9\% (stddev 10.53\%), median 61.08\%, with overall range spanning 34.75\% to 90.65\%.
	\item The ratio $r_{L_3}:=L_3$(\beauty)$/L_3$(\eiucs) had average of 60.82\% (stddev 11.57\%), median 60.88\%, with overall range spanning 24.4\% to 88.48\%.
	\item Whenever \beauty did not apply a third-layer estimator for an edge, it used on average a second-layer estimator in 0.51\% of the cases. Namely, in these cases estimation time was almost always dramatically reduced.
\end{itemize}

%We are able to rule out the following factors: solution length, the value of $l^*$, and search effort metrics such as the number of expanded nodes. % unfortunately there is no room to explain this observation

\begin{table*}[htb]
	\centering
	\begin{tabular}{r r r r r r r r} 
		Domain & \#Prob	 & $r_{L_3}$(Beauty) & $r_{\text{exp}}$(Beauty) & $r_{L_3}$(Any-2) & $r_{\text{exp}}$(Any-2) & $r_{L_3}$(Any-10) & $r_{\text{exp}}$(Any-10)\\
		\hline
		Barman      & 495 & 58.23$\pm$4.52 & 100$\pm$0 & 49.27$\pm$13.29 & 189.1$\pm$20.13  & 48.91$\pm$13.39 & 837.32$\pm$136.03 \\ 
		Caldera     & 72  & 83.25$\pm$3.01 & 100$\pm$0 & 58.48$\pm$8.08  & 176.42$\pm$9.72  & 57.88$\pm$8.42  & 905.15$\pm$90.99  \\
		Cavediving  & 54  & 70.77$\pm$0.78 & 100$\pm$0 & 59.26$\pm$3.33  & 200$\pm$0        & 59.26$\pm$3.33  & 981.48$\pm$47.88  \\
		Elevators   & 27  & 28.81$\pm$3.23 & 100$\pm$0 & 10.13$\pm$6.9   & 145.45$\pm$27.48 & 6.4$\pm$5.18    & 724.26$\pm$210.48 \\
		Floortile   & 36  & 54.83$\pm$0.76 & 100$\pm$0 & 45.13$\pm$7.51  & 183.53$\pm$13.31 & 44.6$\pm$7.68   & 890.43$\pm$100.06 \\
		Parcprinter & 36  & 83.12$\pm$2.62 & 100$\pm$0 & 25.02$\pm$11.24 & 136.34$\pm$15.58 & 22.38$\pm$9.93  & 810.26$\pm$98.03  \\ 
		Scanalyzer  & 18  & 48.18$\pm$1.65 & 100$\pm$0 & 48.16$\pm$1.66  & 200$\pm$0        & 48.16$\pm$1.66  & 994.44$\pm$23.57  \\
		Settlers    & 36  & 71.87$\pm$2.22 & 100$\pm$0 & 40.88$\pm$13.61 & 177.87$\pm$20.82 & 35.34$\pm$15.16 & 692.36$\pm$142.32 \\
		Sokoban     & 36  & 52.24$\pm$0.9  & 100$\pm$0 & 49.2$\pm$2.44   & 196.34$\pm$4.2   & 48.89$\pm$2.68  & 934.51$\pm$94.83  \\
		Tetris      & 45  & 63.3$\pm$4.74  & 100$\pm$0 & 41.91$\pm$7.27  & 180.3$\pm$10.41  & 41.09$\pm$8.37  & 907.11$\pm$126.24 \\
		Transport   & 41  & 47.25$\pm$4.09 & 100$\pm$0 & 17.53$\pm$8.76  & 144.92$\pm$20.83 & 16.01$\pm$8.73  & 760.46$\pm$132.08 \\
		Woodworking & 18  & 61.39$\pm$1.54 & 100$\pm$0 & 44.35$\pm$6.09  & 185.21$\pm$8.7   & 37.95$\pm$6.33  & 816$\pm$183.54    \\
		\textbf{All domains} & \textbf{914} & \textbf{60.82$\pm$11.57}& \textbf{100$\pm$0} & \textbf{46.03$\pm$15.75} & \textbf{182.67$\pm$23.66} & \textbf{45.13$\pm$16.37} & \textbf{849.65$\pm$142.31} \\
		
	\end{tabular}
	\caption{Summarized performance data of \beauty$(\infty,\infty)$, \abeauty-2 and \abeauty-10 (written as Beauty, Any-2 and Any-10 for brevity) relative to \eiucs, with breakdown by domains. For each algorithm and domain two entries are presented with average~$\pm$~standard deviation in percentage: the ratio of third-layer estimator usage $r_{L_3}$(Alg)$:=L_3$(Alg)$/L_3$(\eiucs) and the ratio of expanded nodes $r_{\text{exp}}$(Alg)$:=$expanded(Alg)$/$expanded(\eiucs).}
	\label{table:results}
\end{table*}

\noindent
Table~\ref{table:results} reports the results for all algorithms, compared to \eiucs. The results are grouped by domain (domains listed by row---see caption for column explanation).  The table shows (third column, total for all domains in the last row) that roughly 40\% (100-60.82) of the expensive estimations are avoided, on average. There is high variance, whose causes remain unknown for now.  

\subsection{\abeauty vs others}\label{subsec:any_exp}
%\vspace{-6pt}
%\paragraph{\abeauty}
%\begin{enumerate}
%	\item Discuss the implications of different thresholds on the maximal number of iterations allowed (search and estimation efforts). Specifically address expanded nodes, pruned nodes, l2 and l3 estimators used
%	\item Show bounds trends
%	\item Point out high variance for improvement of \abeauty with MAX-ITER=2 relative to both \beauty and \eiucs
%	\item Emphasize that MAX-ITER=2 seems best for the average case, at least without further knowledge on the estimator sets. Explain that intuitively, more iterations could be useful when the first obtained upper bound is relatively far from $l^*$ (This is supported by the data, but with relatively low numbers to gather strong statistics confidence). Another explaining factor is when there exists a relatively large gap between the best solutions on the open list, which then leads to optimality verification with significant savings
%	\item Point out low variance for improvement of \abeauty with MAX-ITER=10 relative to \abeauty with MAX-ITER=2. Given the two explaining factors presented above, this coincides with very informative first upper bounds and dense open list rival paths
%\end{enumerate}
We now turn to discuss \abeauty-2 and \abeauty-10. The relevant experiment results are summarized in Columns $5,6$ (\abeauty-2) and $7,8$ (\abeauty-10) of Table~\ref{table:results}.

First and foremost, the results reveal that \abeauty-2 and \abeauty-10 save roughly 54\% (100-46) and 55\% (100-45) of the most expensive estimations, compared to \eiucs. This represents an additional 15\% savings on top of \beauty.

Second, although both have relatively high standard deviations (about 16\%), they perform similarly in most domains (see below for the exception).  This can be attributed to the (typically) very informed upper bound $\bar{l}^*$ that is achieved after the first iteration, so there is little room for improvement. Indeed, the lower bound $\underline{l}^*$ typically comes very close to $L^*$ when \abeauty-10 converges, so when $l_{est}$ is set to $\bar{l}^*$ after the first iteration of \abeauty-2, it achieves an almost identical behavior as in the last iteration of \abeauty-10.

We examined more closely the domains where the savings of \abeauty-2 and \abeauty-10 vary noticeably (e.g., in the Elevators domain).  We observed that in many of these problems, the range of values for $c_{old}$, and thus also the range of values for the lower bound estimates (induced by $c_{old}$), is relatively high compared to other domains, i.e., the interval $[A,B] \subset [0,\infty)$ from which the values are taken is relatively large. This implies a less smooth distribution of costs (and estimates) over the graph edges, where it is common to have significant jumps in $g$-values between two subsequent nodes on a path.
The implication of such jumps is that it becomes easier to avoid estimation of non-relevant paths (with $g_l>l_{est}$). In the same cases of larger ranges of values, \abeauty-10 more frequently achieves improved estimation savings compared to \abeauty-2. We believe this may be due to the distribution of costs being less smooth, decreasing the likelihood that $\bar{l}^*$ ends up close to $L^*$ after the first iteration, and allowing more room for improvement in additional iterations.

Finally, Table~\ref{table:results} shows that the two algorithms consume on average roughly 1.8 and 8.5 times the number of expanded nodes of \eiucs, which is due to the search restart at every iteration. In domains where
the estimation savings are similar, it appears that two iterations may be sufficient, and will be much more efficient.  However, more generally---and recalling the abstracted run-time from earlier---this is a good example of how algorithms may increase the search operations, to save on weight computations.
For instance, if the times spent on estimation and search ($T_w$ and $T_v$ resp., Eq.~\ref{eq:efforts}) satisfy $T_w=10\times T_v$ for \eiucs and some problem $P$, then considering a typical factor two of savings in estimation time and twice the search time of \abeauty-2 on $P$, it follows that the latter achieves overall run-time $T_2=0.5\times T_w + 2 \times T_v=7\times T_v$ vs. $T_1=T_w +T_v=11\times T_v$, i.e., a reduction of $\approx 36\%$ in run-time.

Table~\ref{table:convergence} provides additional information that sheds light on the development of search and estimation metrics throughout the iterations. The table follows the iterations of \abeauty-10.
Row 2 indicates the number of times convergence to an optimal solution occurred at iteration $i$, allowing us to examine how many iterations were needed to solve the problems, on average. 
As can be seen, 50\% of the problems take less than 10 iterations, with rapid decrease from $i=9$ down to $i=4$, while the other 50\% terminate at $i=10$ or more (the maximum number of iterations in these experiments was 10).
Row 3 reveals the convergence of the lower bound obtained to the terminal value $L^*$. We can see that the rate of convergence is decaying. Row 4 further strengthen this observation, as the standard deviations are relatively low and also decaying. This motivates using a maximum threshold to avoid a very long convergence process, which could incur significant search effort overhead.

Lastly, Table~\ref{table:pruning} shows the average and standard deviation (Rows 2 and 3, respectively) of pruned nodes out of evaluated nodes, for each iteration of \abeauty-10, in percentage.
It can be seen that the average percentage of pruned nodes is monotonically non-decreasing with the iterations, from roughly 1\% at the second iteration to 26\% at the tenth iteration, which is due to the monotonically non-decreasing upper bound $l_{prune}$, that serves for pruning. Namely, as the upper bound gets tighter, pruning becomes more effective.

\begin{table*}[t]
	\centering
	\begin{tabular}{r r r r r r r r r r r} 
		Iteration $i$                       & $1$&$2$&$3$&$4$&$5$&$6$&$7$&$8$ &$9$ \\
		\hline
		Final $i$(\%)                       & 0  & 0 & 0 & 0 & 1 & 3 & 10& 16 & 20 \\ 
		$\mu(\underline{l}^*_i/L^*)$(\%)    & 40 & 63& 76& 85& 90& 94& 95& 97 & 97 \\ 
		$\sigma(\underline{l}^*_i/L^*)$(\%) & 7  & 9 & 9 & 8 & 7 & 6 & 5 & 4  & 4  \\ 
		
	\end{tabular}
	\caption{Convergence analysis of \abeauty-10, at iteration number $i$. Row 2 indicates the number of times convergence occurred at iteration $i$, Rows 3 and 4 indicate the mean $\mu$ and standard deviation $\sigma$, respectively, for the ratio of the lower bound obtained after iteration $i$ to $L^*$, where the values in Rows 2--4 are in percentages. Results are rounded to integers for ease of presentation.}
	\label{table:convergence}
\end{table*}

\begin{table*}[t]
	\centering
	\begin{tabular}{r r r r r r r r r r r} 
		Iteration $i$                     & $1$&$2$&$3$&$4$  & $5$ & $6$ & $7$ & $8$  & $9$& $10$ \\
		\hline
		$\mu(\text{pr}/\text{ev})$(\%)    & 0  & 1 & 2 & 4   & 10  & 11  & 12  & 15   & 17 & 26  \\ 
		$\sigma(\text{pr}/\text{ev})$(\%) & 0  & 5 & 7 & 10  & 16  & 16  & 17  & 18   & 19 & 22  \\ 
%$\sigma(\text{prun}/\text{eval})$		
	\end{tabular}
	\caption{Pruning analysis of \abeauty-10, at iteration number $i$. Rows 2 and 3 indicate the mean $\mu$ and standard deviation $\sigma$, respectively, for the ratio of pruned nodes out of evaluated nodes, in percentages. Results are rounded to integers for ease of presentation.}
	\label{table:pruning}
\end{table*}

\subsection{\beautyps}\label{subsec:beautyps}
%\vspace{-6pt}
%\begin{enumerate}
%	\item Demonstrate the effectiveness of the procedure in obtaining a very informative upper bound, while consuming very little estimation resources
%	\item Provide empirical evidence for its ability to verify optimality in cases where $l_{est} \le l^*$
%\end{enumerate}
Given that often, two iterations of \abeauty offered the same savings as ten iterations, yet significantly more than a single iteration, it is interesting to examine the role of \beautyps (Procedure~\ref{alg:beautyps}) in improving the results from the first iteration of \abeauty.  Recall that \beautyps  obtains the tightest possible lower bound $\bar{l}^*$ for $c(\pi)$, which can then either be interpreted as $L^*$ if $opt=true$ is returned, or as an upper bound for $L^*$ otherwise. When \beauty is called with its hyper-parameters set to $\infty$, it is optimal; \beautyps has nothing to improve. However, when it is called as part of \abeauty, the hyper-parameters are different, which gives \beautyps the potential to improve the results before the next iteration.

The results provide insight to the efficacy of this procedure. When calling \beautyps after \beauty is run with $l_{est}=0$ and $l_{prune}=\infty$ (the least informative hyper-parameters), \beautyps returns on average $\bar{l}^*=1.0082 \times L^*$, i.e., only 0.82\% higher than $L^*$, with standard deviation of 3.31\%, where in the worst case $\bar{l}^*$ was 33.33\% higher than $L^*$.
This means that just one iteration of \beauty that uses the cheapest lower bounds during the search, followed by \beautyps, typically returns a very good approximation of $L^*$ in the form of a very informed upper bound for it.
Furthermore, \beautyps utilizes only a tiny fraction of the expensive estimators, as it only estimates edges on the solution path. 
Thus, on average, \beauty with $l_{est}=0, l_{prune}=\infty$ was able to generate a very accurate approximation of the optimal solution, though at the loss of guaranteed optimality, at minimal estimation effort overhead.  

%\begin{observation}\label{cor:approximation_of_l*}
%	If one is willing to compromise optimality, a greedy approach---that finds a solution solely based on the cheapest estimates---can often yield a good approximation.
%\end{observation}
%Conservatively, we recognize that this observation could be wrong for cases that were not considered in the scope of our experiments. 
%Since we did not recognize factors that affect the fitness of this approximation, it is left for future work.
%TODO: it seems like a high priority research direction to understand when does the greedy approach probably yields a good approximation.

\subsection{Different Accuracy Levels}\label{subsec:accuracy}
Table~\ref{table:f_config} determines the accuracy range of estimators in our experiments.
Indeed, for an edge $e$, the accuracy of its first cost estimate $l_e^1$ relative to its best estimate $l_e^3$ is $l_e^1/l_e^3 = f_1/f_3$.
A high ratio of $f_1/f_3$ implies that a cheap estimator yields a good approximation of the best estimate.
It is thus interesting to test the sensitivity of the algorithms discussed in this section w.r.t. different accuracy levels.
% which are implied by different configurations of $f_1,f_2$ and $f_3$.

To that end, we ran another experiment with the same setting as described before, in four domains (Barman, Settlers, Sokoban, Tetris), and with $f_1 \in \{10, 11, 12\}, f_2 \in \{f_1+1, f_1+2, f_1+3\}, f_3 \in \{f_2+1\}$, which resulted in significantly higher ratios of $f_1/f_3$.
Specifically, the range of $f_1/f_3$ changed from $20\%-60\%$ to roughly $71.43\%-83.33\%$. 

The results of expensive estimator usage (i.e., $r_{L_3}$ of \beauty, \abeauty-2 and \abeauty-10) are almost identical to the results reported in Table~\ref{table:results}, with at most $1\%$ difference in any entry. However, the convergence of \abeauty-10 was faster, where the average number of iterations until convergence changed from 9 in the first experiment to 5.65 in the second experiment.
This suggests that relatively accurate cheap estimators do not affect the number of expensive estimations required to achieve optimality, but reduce the number of iterations necessary for convergence.
\section{Conclusions}\label{sec:conc}
%% re-iterate what we've done: novel framework, derived problems, algorithms
This paper presents a generalized framework for \textit{estimated weighted directed graphs}, where the cost of each edge can be estimated by multiple estimators, where every estimator has its own run-time and returns lower and upper bounds on the edge weight.
This allows to address novel settings of combinatorial search problems that support an explicit trade-off of search and estimation time.
We focus on the \textit{shortest path tightest lower bound} (SLB) problem, which we formally define. SLB problems involve finding a path with the tightest lower bound on the optimal cost. We present two algorithms for solving SLB problems in a guaranteed manner.  Experiments reveal
the dramatic computational savings they offer.
%% suggest future work on the defined problems, on other possible problems, on undirected graphs, and on informed search
%TODO: suggest future work to develop a meta-reasoning framework based on \abeauty that takes into account various data channels (search time, estimation time, estimator density, lower bound and upper bound) to determine how to set $l_{est}, l_{prune}$ in each iteration.
%A natural line of future work is to develop algorithms, similar in spirit to the ones presented here, for solving other variants of the shortest path problem, analogous to the problem we address here. 
%the shortest path w.r.t. upper bounds and the tightest admissible shortest path.

There are many directions for future research. We believe the performance of the algorithms can be further improved (e.g., by utilizing priors on estimation times to choose estimators across edges).  % Other algorithmic approaches can be tested as well.
We plan to investigate shortest path variants that minimize path upper-bound and $\mathcal{B}$-admissibility.
Extensions for undirected graphs and for informed search are also of significant interest. 
%, as are additional novel graph search problems that are based on estimated, rather than exact, costs. 

%, each aiming for a different trade-off between optimism and pessimism in face of uncertainty. 
%Thus a natural direction is to develop algorithms that solve new problems.

\section*{Acknowledgements}
The research was partially funded by ISF Grant \#2306/18 and BSF-NSF grant 2017764. Thanks to K. Ushi. 
This research was also supported by ISF grant \#909/23 to Ariel Felner.
Eyal Weiss is supported by the Adams Fellowship Program of the Israel Academy of Sciences and Humanities and by Bar-Ilan University's President Scholarship.

%\clearpage
% References and End of Paper
% These lines must be placed at the end of your paper
% \bibliographystyle{named}
\bibliography{short,weiss_references,graph_references}
\end{document}